\pgfplotsset{compat=newest}
\newtheorem{theorem}{Theorem}
\newtheorem{corollary}{Corollary}
\newtheorem{problem}{Problem}
\def\ve#1{{\mathchoice{\mbox{\boldmath$\displaystyle #1$}}%
              {\mbox{\boldmath$\textstyle #1$}}%
              {\mbox{\boldmath$\scriptstyle #1$}}%
              {\mbox{\boldmath$\scriptscriptstyle #1$}}}}
\definecolor{shadecolor}{gray}{.65}
\definecolor{institut_color_orig}{rgb}{0.63922,0.14902,0.21961}
\definecolor{institut_color_gray}{rgb}{0.5,0.14902,0.21961}
\definecolor{institut_color}{rgb}{0.85922,0.12902,0.21961}
\definecolor{my_green}{rgb}{0.2,0.75,0.35}
\definecolor{my_gray}{rgb}{0.5,0.14902,0.21961}
\definecolor{my_blue}{rgb}{0.2,0.1,0.6}
\definecolor{my_yellow}{rgb}{1,0.6,0.02}
\newcommand{\unequal}[1]{\textcolor{institut_color_orig}{#1}}
\newcommand{\blue}[1]{\textcolor{my_blue}{#1}}
\newcommand{\Code}{\mathcal{C}}
\renewcommand{\H}{\ve{H}}
\newcommand{\zeromat}{\ve{0}}
\renewcommand{\r}{\ve{r}}
\renewcommand{\c}{\ve{c}}
\renewcommand{\u}{\ve{u}}
\renewcommand{\v}{\ve{v}}
\newcommand{\e}{\ve{e}}
\newcommand{\h}{\ve{h}}
\newcommand{\htilde}{\tilde{\ve{h}}}
\newcommand{\Horth}{\ve{\widetilde{H}}}
\newcommand{\Hnotorth}{\ve{\overline{H}}}
\newcommand{\OC}[1]{\mathrm{OC}\left(#1\right)}
\newcommand{\F}{\mathbb{F}_2}
\newcommand{\V}{\mathcal{V}}
\renewcommand{\S}{\mathcal{S}}
\newcommand{\E}{\mathrm{E}}
\newcommand{\Var}{\mathrm{Var}}
\newcommand{\nrows}[1]{m_{#1}}
\newcommand{\nrank}[1]{rk_{#1}}
\newcommand{\Horig}{\H^{(\mathrm{I})}}
\renewcommand{\V}{\mathcal{V}}
\newcommand{\RS}[1]{\langle #1 \rangle}
\newcommand{\kmax}{k_\mathrm{max}}
\newcommand{\rI}{\r_\mathrm{I}}
\DeclareMathOperator{\rank}{rank}
\newcommand{\AttackerUncertainty}{\mathrm{H}(\r|\Code)}
\newcommand{\CondUncertaintyH}{\mathrm{H}(\r|\Code)}
\title{\LARGE \bf
Constructing an LDPC Code Containing a Given Vector
}
\author{Sven M\"uelich$^{1}$, Sven Puchinger$^{2}$ and Martin Bossert$^{1}$%
\thanks{$^{1}$Sven M\"uelich and Martin Bossert are with Institute of Communication Engineering, Ulm University, Germany. {\tt\small \{sven.mueelich,martin.bossert\}@uni-ulm.de}}
\thanks{$^{2}$Sven Puchinger is with the Institute for Communications Engineering, Technical University Munich,
        Germany.
        {\tt\small sven.puchinger@tum.de} This work was partly done while Sven Puchinger was with Ulm University.}%
}
\begin{document}

\maketitle
\thispagestyle{empty}
\pagestyle{empty}

\begin{abstract}
		The coding problem considered in this work is to construct a linear code $\Code$ of given length $n$ and dimension $k<n$ such that a given binary vector $\r \in \F^{n}$ is contained in the code.	
		We study a recent solution of this problem by M\"uelich and Bossert, which is based on LDPC codes.
		We address two open questions of this construction.
		First, we show that under certain assumptions, this code construction is possible with high probability if $\r$ is chosen uniformly at random.
		Second, we calculate the uncertainty of $\r$ given the constructed code $\Code$.
		We present an application of this problem in the field of Physical Unclonable Functions (PUFs).
\end{abstract}

\section{PROBLEM STATEMENT}
The problem considered in this work is stated as follows:
\begin{problem}
	\label{problem}
Given a vector $\r \in \F^n$ and a desired dimension $k<n$.
Find a binary linear code $\Code(n, \approx k)$ with $\r \in \Code$.
\end{problem}

Beside the theoretical interest of the problem, a practical application is error correction for Physical Unclonable Functions (PUFs), which we will explain in Section~\ref{sec:application}.
A construction of binary linear codes based on good LDPC codes that fulfill the requirements stated in Problem~\ref{problem} was proposed by M\"uelich and Bossert in  \cite{muelich2017new}.
By \emph{good}, we mean that the resulting code can be decoded well (e.g., its Tanner graph has a large girth). In particular, this excludes constructions that randomly choose low-weight parity checks in the dual code of $\r$.
In this paper, we study two questions which have not been addressed so far.

\begin{itemize}
	\item[(P1)] Calculation of $P(\exists\Code: \r\in\Code)$, i.e., the probability that such a construction is possible for a given $\r \in U(\F^{n})$, where $U(\F^{n})$ denotes the uniform distribution of all vectors in $\F^{n}$.
	\item[(P2)] Calculation of $H(\r|\Code)$, i.e., the uncertainty of an attacker who wants to recover $\r$ and knows the code $\Code$. (The application in Section~\ref{sec:security} provides an example where $\r$ can be interpreted as secret key.)
\end{itemize}

(P1) is studied in Section~\ref{sec:successful_construction}. 
We show that under certain assumptions the construction is possible with high probability.
Numerical results indicate that these assumptions are met.
(P2) is discussed in Section~\ref{sec:security}.
We derive an analytic closed-form expression of the uncertainty and show that if the code construction is successful, we have $k-2 \leq H(\r \mid \Code) \leq k$.

\section{PRELIMINARIES}
\label{sec:preliminaries}

\subsection{Code Construction}
\label{subsec:construction}
The code construction solving Problem~\ref{problem} which was proposed in \cite{muelich2017new} is based on LDPC codes  \cite{gallager1962low}.
LDPC codes are constructed by generating a parity check matrix of low density.
Let $\Horig$ be a low-density $\nrows{\Horig} \times n$ matrix of rank $\nrank{\Horig}$, where $\nrows{\Horig}$ is allowed to be larger than $\nrank{\Horig}$.
$\Horig$ can be interpreted as a decoding matrix of an LDPC code of length $n$ and dimension $n-\nrank{\Horig}$.
Assume $\Horig$ is given, below (cf. Section~\ref{subsec:obtaindecodingmatrix}) we describe how to obtain $\Horig$.
Table~\ref{tab:size_examples} shows for three examples how much larger $\nrows{\Horig}$ can be compared to $\nrank{\Horig}$.

\begin{table}[hbtp]
	\begin{center}
		\def\arraystretch{1.0}
		\setlength{\tabcolsep}{0.1cm}
		\begin{tabular}{ c | c | c | c }		
			$n$ & $k$ & $\nrows{\Horig}$ &  $\nrank{\Horig}$ \\
			\hline
			128 & 13 & 881 & 115 \\
			128 & 56 & 349 & 72\\
			256 & 106 & 555 & 150
		\end{tabular}
	\end{center}
	\caption{Code examples \cite{muelich2017new}.}
	\label{tab:size_examples}
\end{table}

The aim is to find an LDPC code $\Code$ such that $\r \in \Code$, where $\r \in U(\F^{n})$ is given.
Algorithm~\ref{alg:scheme} summarizes the method from \cite{muelich2017new} for constructing a convenient decoding matrix $\H$ from $\r$ and $\Horig$:
Since we want to have $\r\in\Code$, it is required that $\H\r^T = \zeromat$, and hence $\h_i \r^T = 0$ for all rows $\h_i$ of $\H$.
We select the rows $\h_i$ from the matrix $\Horig$ for which this condition is fulfilled, i.e., the rows which are orthogonal\footnote{In coding theory often the term dual is used instead of orthogonal. In this work these two terms are exchangeable.} to $\r$, and use them to construct the decoding matrix $\H$ we are aiming for.
We require that $k = dim(\Code) \geq H(\r|\Code)$.
Hence $\H$ should have at most rank $n-H(\r|\Code)$.
While adding parity check equations to $\H$, the rank of the matrix increases, and hence the dimension $k$ decreases. 
We stop the process, when $\H$ has a desired dimension $k$ in order to be flexible in adjusting the code rate of the final LDPC code.

\begin{algorithm}[hbt]
	\KwIn{Vector $\r \in \F^n$, dec. matrix $\Horig$ with $\nrows{\Horig}$ rows}
	\KwOut{Decoding matrix $\H$ of a code $\Code(n,k)$, such that $\r \in \Code$  }
	$\H$ is matrix with $n$ columns and $0$ rows.\label{alg:scheme2}\\
	\For{$i=1,2,\dots,\nrows{\Horig}$}{
		\If{$i$-th row $\h_i$ of $\Horig$ is orthogonal to $\r$}{
			Vertically append $h_i$ to $\H$.
		}
	}
	\Return{$\H$} 
	\caption{LDPC code construction algorithm \cite{muelich2017new}}
	\label{alg:scheme}
\end{algorithm}

\subsection{How to get a decoding matrix $\Horig$}
\label{subsec:obtaindecodingmatrix}

In Section~\ref{subsec:construction} it was assumed that $\Horig$ is given.
This section discusses how $\Horig$ can be constructed.
In \cite{muelich2017new}, the rows of $\Horig$ were chosen as the union of rows of different decoding matrices $\Horig_i$ of various constructions of LDPC codes $\Code_i$ of length $n$ (Euclidean and Projective Geometries \cite{kou2000low}, and Reed--Solomon based \cite{djurdjevic2003class} LDPC codes).
As described in Section~\ref{subsec:construction}, we are only interested in parity check equations that are orthogonal to $\r$.
In general, the error correction performance of a code decreases when the number of parity check equations decreases.
Hence, if we use only one LDPC construction method and take the appropriate rows from the corresponding parity check matrix, we result in a weak error correction performance. 
Thus, to increase the number of errors that can be corrected, it is necessary to extend the number of parity check equations what can be done by combining selected rows of several LDPC parity check matrices.

\subsection{Notation}

Let $\V \subseteq \F^{n}$ be a subspace. 
Then, 
\begin{align*}
\OC{\V} := \left\{ \u \in \F^n : \u\v^T=0 \; \forall \v \in \V \right\}
\end{align*}
with $\dim(\OC{\V}) = n-\dim(\V)$ is the \emph{orthogonal complement} of $\V$. 
Let $\H$ be a matrix with \emph{row space} $\RS{\H}$.
For convenience, we write $\OC{\H} := \OC{\RS{\H}}$.
In coding theory, the orthogonal complement of a code is often called \emph{dual code}.
A \emph{parity check matrix} of a linear code $\Code$ is a basis of $\OC{\Code}$.
Any other $\H$ whose rows span $\OC{\Code}$ is called \emph{decoding matrix}.

\section{SUCCESSFUL CODE CONSTRUCTIONS}
\label{sec:successful_construction}

In this section, we would like to analyze for which vectors $\r$ it is possible to construct an LDPC code $\Code$ of sufficiently small dimension $k$ that contains $\r$ as codeword.
For instance, if $k$ should be at most $\kmax$, we need output $\H$ of Algorithm~\ref{alg:scheme} to have at least rank $n-\kmax$.

\subsection{Theoretical Ideas}
\label{subsec:code_construction_theoretical}

Let $\h_i$ be the $i$-th row of $\Horig$.
The probability of $\h_i$ being orthogonal to $\r$ depends on the number of ones in the positions of $\r$ which are indexed by the support of $\h_i$. Since $\r \sim \mathcal{U}(\F^n)$, it follows that
\begin{align*}
\Pr\left( \r \cdot \h_i^T=0  \right) = \frac{1}{2}.
\end{align*}
Due to sparsity, the support of two distinct rows $\h_i \neq \h_j$ of $\Horig$ is unlikely to overlap.
In case of an overlap, it only concerns a small number of positions. 
Therefore, we assume that the events $\r \cdot \h_i^T=0$ are statistically independent.
Thus, the number of rows $\nrows{\H}$ of $\H$ is binomially distributed with parameters $\nrows{\Horig}$ and $\tfrac{1}{2}$, i.e.,
$\nrows{\H} \sim \mathrm{Bin}(\nrows{\Horig}, \tfrac{1}{2})$.
We will show in Section~\ref{subsec:code_construction_practical} that this can be verified for practical examples.

We can thus conclude that if the decoding matrix $\Horig$ contains many more rows than its rank, i.e., $\nrows{\Horig} \gg \nrank{\Horig}$, then the probability of having less than $\nrank{\Horig}$ rows in $\H$ is negligible.
Also, the rank of $\H$ fulfills
\begin{align*}
\rank(\H) \geq \rank(\Horig)-1
\end{align*}
with large probability since the $\nrows{\H} \gg \nrank{\Horig}$ rows of $\H$ are very likely to contain a generating set of
\begin{align*}
\V := \RS{\Horig} \cap \OC{\r}
\end{align*}
and
\begin{align*}
	\dim\left(\V\right) &= \dim\left(\RS{\Horig}\right)+\dim\left(\OC{\r}\right)\\&-\underset{\leq n}{\underbrace{\dim\left(\RS{\Horig}+\OC{\r}\right)}} \\
	&\geq \rank\left(\Horig\right)-1.
\end{align*}

\subsection{Numerical Results}
\label{subsec:code_construction_practical}

We verify the results in Section~\ref{subsec:code_construction_theoretical} in a practical scenario.
The example uses an $(512,139)$ Euclidean Geometry code, whose decoding matrix $\Horig$ has $\nrows{\Horig} = 4672$ rows.
We generated random vectors $\r \sim \mathcal{U}(\F^n)$ and used Algorithm~\ref{alg:scheme} to obtain a code with decoding matrix $\H$ containing $\r$.
Using the results of Section~\ref{subsec:code_construction_theoretical}, the number of rows $\nrows{\H}$ should be approximately $\mathrm{Bin}(\nrows{\Horig}=4672,0.5)$ distributed.
Figure~\ref{fig:empirical_cdf_nrowsH} visualizes the empirical cdf (sample size = $10^6$) in comparison to the theoretical cdf.

\begin{figure}[h!]
	\begin{center}
		\input{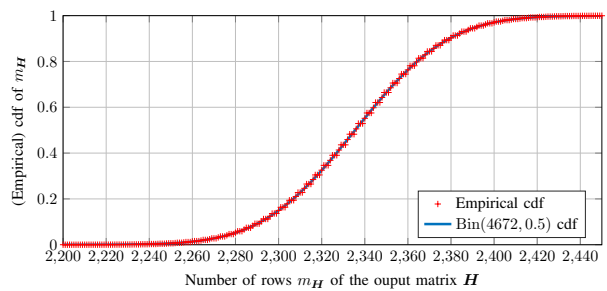}
	\end{center}
	\caption{Comparison of the empirical cdf of the number of rows $\nrows{\H}$ of $\H$ with the cdf of a $\mathrm{Bin}(\nrows{\Horig},0.5)$ distributed random variable. Sample size $10^6$, number of rows $\nrows{\Horig} = 4672$ of $\Horig$.}
	\label{fig:empirical_cdf_nrowsH}
\end{figure}

It can be seen that the two curves almost coincide.
Estimated mean and variance of $\nrows{\H}$ are $\widehat{\mu} \approx 2335.99$ and $\widehat{\sigma^2} \approx 1169.49$. This is close to the theoretical values $\E(\nrows{\H}) = \tfrac{\nrows{\Horig}}{2} = 2336$ and $\Var(\nrows{\H}) = \tfrac{\nrows{\Horig}}{4} = 1168$.
In addition, it was also observed that $\rank(\H)= 272 = \rank(\Horig)-1$ for all tested cases (sample size $=10^3$ due to larger complexity of rank computation), which also coincides with the theoretical prediction.

\section{ENTROPY DISCUSSION}
\label{sec:security}

Assume $\r$ is a secret that should be hidden from an attacker that knows the code $\Code$. 
This is, for instance, the case in the application presented in Section~\ref{sec:application}.
Since the code $\Code$ contains $2^k$ codewords, including $\r$, the uncertainty of $\r$ for an attacker is at most $k$.
In this section, we derive a lower bound on the uncertainty and show that in most cases, it is lower-bounded by $k-2$.

\subsection{Uncertainty of an Attacker}
\label{subsec:uncertainty_attacker}

Let $\Horig$ denote the original decoding matrix and $\Horth$ be the rows of $\Horig$ that are orthogonal to $\r$ and $\Hnotorth$ be the other rows.
The ``attack'' described above, which uses the fact that $\r$ is a codeword of $\Code$, or in other words
\begin{align}
	\r \in \Code := \OC{\Horth}, \label{eq:attack_cond_1}
\end{align}
is suboptimal since it does not make use of $\Hnotorth$. We can in addition use the latter by the following observation:
\begin{align}
	\r \in \bigcap_{i=1}^{\nrows{\Hnotorth}} \V_i, \label{eq:attack_cond_2}
\end{align}
where $\V_i := \F^n \setminus \OC{\h_i}$ with the $i$-th row $\h_i$ of $\Hnotorth$.
By combining Equations \eqref{eq:attack_cond_1} and \eqref{eq:attack_cond_2}, we obtain
\begin{align*}
\r \in \Code \cap \left( \bigcap_{i=1}^{\nrows{\Hnotorth}} \V_i \right) =: \S.
\end{align*}
The uncertainty of an attacker $\AttackerUncertainty$ about $\r$ is directly determined by the size of $\S$.
More precisely, since $\r \in \mathcal{U}(\F^n)$, the uncertainty is lower-bounded by
\begin{align*}
\AttackerUncertainty \geq \log_2(|\S|).
\end{align*}
We use Theorem~\ref{thm:rI_H'} to derive a lower bound on $\AttackerUncertainty$.

\begin{theorem}\label{thm:rI_H'} 
	A response $\r$ fulfills Conditions \eqref{eq:attack_cond_1} and \eqref{eq:attack_cond_2} if and only if
	$\r' \in \OC{\H'}$, where
	\begin{align*}
	\r' := [\r,1] \in \F^{n+1}, 
	\end{align*}
	\begin{align*}
	\H' := \begin{bmatrix}
	\Horth & \ve{0}_{\nrows{\Horth} \times 1} \\
	\Hnotorth & \ve{1}_{\nrows{\Hnotorth} \times 1}
	\end{bmatrix} \in \F^{(\nrows{\Horth}+\nrows{\Hnotorth}) \times (n+1)}.
	\end{align*}
\end{theorem}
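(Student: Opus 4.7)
The plan is to unwind both conditions into matrix equations over $\F$ and then recognise them as the two block rows of $\H'(\r')^T = \zeromat$. Condition \eqref{eq:attack_cond_1} says $\r \in \OC{\Horth}$, which by definition is equivalent to $\Horth\r^T = \zeromat$. Condition \eqref{eq:attack_cond_2} requires $\r \in \V_i = \F^n \setminus \OC{\h_i}$ for every row $\h_i$ of $\Hnotorth$, i.e.\ $\h_i \r^T \neq 0$; since we work over $\F = \mathbb{F}_2$, this is equivalent to $\h_i \r^T = 1$, so the collection of these conditions reads $\Hnotorth \r^T = \ve{1}_{\nrows{\Hnotorth} \times 1}$.

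Next I would compute the block product $\H'(\r')^T$ directly. Writing $\r' = [\r,1]$, the top block contributes
\begin{align*}
\begin{bmatrix}\Horth & \ve{0}_{\nrows{\Horth}\times 1}\end{bmatrix} \r'^T = \Horth\r^T + \zeromat\cdot 1 = \Horth\r^T,
\end{align*}
while the bottom block contributes
\begin{align*}
\begin{bmatrix}\Hnotorth & \ve{1}_{\nrows{\Hnotorth}\times 1}\end{bmatrix} \r'^T = \Hnotorth\r^T + \ve{1}_{\nrows{\Hnotorth}\times 1}.
\end{align*}
Hence $\r' \in \OC{\H'}$ is equivalent to $\H'(\r')^T = \zeromat$, which splits precisely into $\Horth\r^T = \zeromat$ and $\Hnotorth\r^T = \ve{1}_{\nrows{\Hnotorth}\times 1}$. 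Combining with the two translations from the previous paragraph gives the claimed equivalence in both directions.

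There is no real obstacle here; the argument is essentially bookkeeping. The only subtle point worth flagging explicitly is the use of $|\F| = 2$, which turns the inequality $\h_i\r^T \neq 0$ from the definition of $\V_i$ into the equality $\h_i\r^T = 1$ that the appended all-ones column is designed to cancel. All other steps are definitional manipulations of orthogonal complements and block matrix multiplication.
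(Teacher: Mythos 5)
Your proposal is correct and follows essentially the same route as the paper's proof, which verifies row by row that $[\h_i,0]\cdot[\r,1]^T=0$ encodes $\h_i\r^T=0$ and $[\h_i,1]\cdot[\r,1]^T=0$ encodes $\h_i\r^T=1$; you merely phrase it as a block computation and are slightly more careful in making the second equivalence explicitly two-directional via $|\F|=2$. No gaps.
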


\begin{proof}
	Condition \eqref{eq:attack_cond_1} is fulfilled if and only if
	\begin{align*}
	\h_i \cdot \r^T = 0 \quad \forall i,
	\end{align*}
	where $\h_i$ is the $i$-th row of $\Horth$. This again holds if and only if
	\begin{align*}
	[\h_i,0] \cdot [\r,1]^T = 0 \quad \forall i,
	\end{align*}
	where $[\h_i,0]$ is the $i$-th row of $\begin{bmatrix} \Horth & \ve{0}_{\nrows{\Horth} \times 1} \end{bmatrix}$, which means that $[\r,1]$ is in the right kernel of $\begin{bmatrix} \Horth & \ve{0}_{\nrows{\Horth} \times 1} \end{bmatrix}$.
	Similarly, Condition \eqref{eq:attack_cond_2} is satisfied if and only if
	\begin{align*}
	\h_i \cdot \r^T = 1 \quad \forall i,
	\end{align*}
	where $\h_i$ is the $i$-th row of $\Hnotorth$. This implies that
	$[\h_i,1] \cdot [\r,1]^T = 0 \quad \forall i$,
	where $[\h_i,1]$ is the $i$-th row of $\begin{bmatrix} \Hnotorth & \ve{1}_{\nrows{\Hnotorth} \times 1} \end{bmatrix}$, which means that $[\r,1]$ is in the right kernel of $\begin{bmatrix} \Hnotorth & \ve{1}_{\nrows{\Hnotorth} \times 1} \end{bmatrix}$.
	Since the rows of $\begin{bmatrix} \Horth & \ve{0}_{\nrows{\Horth} \times 1} \end{bmatrix}$ and $\begin{bmatrix} \Hnotorth & \ve{1}_{\nrows{\Hnotorth} \times 1} \end{bmatrix}$ are exactly those of $\H'$, we get that the conditions are fulfilled if and only if $[\r,1]$ is in the right kernel of~$\H'$.
\end{proof}

Using the statement of Theorem~\ref{thm:rI_H'}, the cardinality of $\log_2(|\S|)$, and therefore a lower bound on the attacker's uncertainty about $\r$, is directly determined by the rank of the matrix $\H'$ in the following way.
\begin{corollary}\label{cor:log2S}
	$\CondUncertaintyH \geq n-\rank\left( \H' \right)$.
\end{corollary}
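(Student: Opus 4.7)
The plan is to combine Theorem~\ref{thm:rI_H'} with the lower bound $\AttackerUncertainty \geq \log_2(|\S|)$ stated just before the corollary, so the real task reduces to bounding $|\S|$ from below in terms of $\rank(\H')$.

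First I would translate $\S$ via Theorem~\ref{thm:rI_H'}: a vector $\r \in \F^n$ lies in $\S$ iff $[\r,1] \in \OC{\H'}$. Writing $\V' := \OC{\H'} \subseteq \F^{n+1}$, I have $\dim(\V') = (n+1) - \rank(\H')$. I would then split $\V'$ according to the value of the last coordinate, defining the linear subspace $\V'_0 := \{\v \in \V' : v_{n+1} = 0\}$ and the affine set $\V'_1 := \{\v \in \V' : v_{n+1} = 1\}$, so that $|\S| = |\V'_1|$.

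Next I would argue that $\V'_1$ is a (non-empty) coset of $\V'_0$. Non-emptiness follows from the construction itself: the original vector $\r$ that generated the partition $(\Horth, \Hnotorth)$ trivially satisfies Conditions~\eqref{eq:attack_cond_1} and \eqref{eq:attack_cond_2}, so $[\r,1] \in \V'$ and hence $\V'_1 \neq \emptyset$. Therefore $|\V'_1| = |\V'_0|$. The dimension of $\V'_0$ is at worst one less than $\dim(\V')$ (since it is the kernel of the linear functional $\v \mapsto v_{n+1}$ restricted to $\V'$), giving
\begin{align*}
\dim(\V'_0) \geq \dim(\V') - 1 = n - \rank(\H').
\end{align*}
Consequently $|\S| = |\V'_0| \geq 2^{\,n-\rank(\H')}$, and plugging this into $\AttackerUncertainty \geq \log_2(|\S|)$ yields the claim.

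There is no real obstacle; the only subtle point is ensuring that $\V'_1$ is non-empty, which would be easy to miss but is immediate from the fact that the vector $\r$ used to define $\Horth$ and $\Hnotorth$ is itself a witness. Once that observation is in place, the rest is the standard codimension-at-most-one argument for level sets of a linear functional.
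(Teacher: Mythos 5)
Your proposal is correct and follows essentially the same route as the paper: both identify $\S$ via Theorem~\ref{thm:rI_H'} with the set of vectors in $\OC{\H'}$ whose last coordinate is $1$ and then apply rank--nullity to get $\log_2(|\S|) = \dim\bigl(\OC{\H'}\bigr)-1 = n-\rank(\H')$. The only difference is that you explicitly justify the ``$-1$'' step (the slice is a non-empty coset of the codimension-one kernel of the last-coordinate functional, with $[\r,1]$ as witness), which the paper asserts without comment, so your write-up is if anything slightly more complete.
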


\begin{proof}
	Using the rank nullity theorem $(\ast)$, we get
	\begin{align*}
		\log_2(|\S|) &= \dim\left( \OC{\H'} \right)-1 \\
		&\overset{(\ast)}{=} (n+1)-\rank\left( \H' \right)-1 \\
		&= n-\rank\left( \H' \right),
	\end{align*}
	which proves the claim. 
\end{proof}

Since the matrix $\H'$ is directly known after constructing $\Code$, Corollary~\ref{cor:log2S} provides a tool to determine a lower bound on the uncertainty of the attacker.
We will see in the next subsection that $\log_2(|\S|)$ is at least $k-2$ with large probability, where $k$ is the dimension of $\Code$.

\subsection{Practical Considerations}
\label{subsec:prevention}

In Section~\ref{subsec:uncertainty_attacker}, we have derived a lower bound on an attacker's uncertainty about $\r$ which only depends on the rank of the matrix $\H'$.
The following theorem shows that this rank is sufficiently small for the case of a successful code construction, i.e., $\rank(\Horth) \geq \rank(\Horig)-1$.

\begin{theorem}
	If $\rank(\Horth) \geq \rank(\Horig)-1$ then $\rank(\H') \leq \rank(\Horth)+2$.
\end{theorem}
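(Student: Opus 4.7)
The plan is to exhibit an explicit spanning set of $\RS{\H'}$ of size at most $\rank(\Horth)+2$. First I would combine the hypothesis $\rank(\Horth) \geq \rank(\Horig)-1$ with the trivial inequality $\rank(\Horth) \leq \rank(\Horig)$ (since $\Horth$ is a row-submatrix of $\Horig$) to conclude that $\rank(\Horig) \leq \rank(\Horth)+1$. Because every row of $\Horth$ and of $\Hnotorth$ lies in $\RS{\Horig}$, the embedded subspace
\begin{align*}
\mathcal{W} := \left\{ [\h,0] \in \F^{n+1} : \h \in \RS{\Horig} \right\}
\end{align*}
has dimension at most $\rank(\Horth)+1$, and it already contains every row of the top block of $\H'$.

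Next I would reduce the bottom block to a single extra generator. The case $\Hnotorth$ empty is immediate, so I would fix an arbitrary row $\h_0$ of $\Hnotorth$. For any other row $\h_j$ of $\Hnotorth$, the difference $\h_j - \h_0$ lies in $\RS{\Horig}$, and hence
\begin{align*}
[\h_j,1] \;=\; [\h_j - \h_0,\,0] \,+\, [\h_0,\,1] \;\in\; \mathcal{W} + \mathrm{span}\{[\h_0,1]\}.
\end{align*}
Combining both observations gives $\RS{\H'} \subseteq \mathcal{W} + \mathrm{span}\{[\h_0,1]\}$, from which the bound $\rank(\H') \leq (\rank(\Horth)+1) + 1 = \rank(\Horth)+2$ follows directly.

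The argument is essentially bookkeeping, so I do not expect any real obstacle. The only point worth emphasizing is that the hypothesis is used exactly once, to pass from $\RS{\Horig}$ to a subspace of dimension at most $\rank(\Horth)+1$; the appended all-ones column is then absorbed by the single additional generator $[\h_0,1]$, which accounts for the remaining ``$+1$'' in the stated bound. Without the hypothesis, the same reasoning would only yield the weaker estimate $\rank(\H') \leq \rank(\Horig)+1$.
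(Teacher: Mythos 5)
Your proof is correct, but it takes a genuinely different route from the paper's. The paper keeps the base subspace as $\RS{\begin{bmatrix}\Horth & \ve{0}\end{bmatrix}}$ (dimension $\rank(\Horth)$), uses the hypothesis together with the dimension formula to show that $\RS{\Horth}\cap\RS{\Hnotorth}$ has codimension at most one in $\RS{\Hnotorth}$, and then sorts the rows of the bottom block into two types, each of which contributes at most one additional generator --- hence the ``$+2$''. You instead enlarge the base to $\mathcal{W}=\{[\h,0]:\h\in\RS{\Horig}\}$, of dimension $\rank(\Horig)\leq\rank(\Horth)+1$ by the hypothesis plus the trivial inequality $\rank(\Horth)\leq\rank(\Horig)$, and then absorb the entire bottom block with the single extra generator $[\h_0,1]$, since any two bottom rows differ by an element of $\mathcal{W}$. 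Your version is shorter, avoids both the intersection-dimension computation and the case split into row types, and isolates the role of the hypothesis more cleanly: it really proves the unconditional bound $\rank(\H')\leq\rank(\Horig)+1$, from which the theorem follows immediately. The paper's argument buys slightly more structural information (explicit representatives of each ``type'' of bottom row relative to $\RS{\Horth}$), but that information is not used elsewhere, so your streamlined argument loses nothing for the purposes of the corollary that follows.
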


\begin{proof}
	Due to $\rank(\Horth) \geq \rank(\Horig)-1$, we get
	\begin{align*}
		\dim\left( \RS{\Horth} \cap \RS{\Hnotorth} \right) &= \underset{\geq \, \rank(\Horig)-1}{\underbrace{\rank\left(\Horth\right)}} \\ &+ \rank\left(\Hnotorth\right) - \underset{= \, \rank(\Horig)}{\underbrace{\dim\left( \RS{\Horth} + \RS{\Hnotorth} \right)}} \\
		&\geq \rank\left(\Hnotorth\right)-1.
	\end{align*}
	Hence, since $\RS{\Horth} \cap \RS{\Hnotorth}$ is a subspace of $\RS{\Hnotorth}$ of dimension at least $\rank\left(\Hnotorth\right)-1$, the vector space $\RS{\Hnotorth}$ is a direct sum of the form
	\begin{align*}
	\RS{\Hnotorth} = \RS{\h} + \left(\RS{\Horth} \cap \RS{\Hnotorth}\right),
	\end{align*}
	where $\h \in \RS{\Hnotorth}$.
	This means that all rows of the lower half of $\H'$, i.e., $\begin{bmatrix} \Hnotorth & \ve{1}_{\nrows{\Hnotorth} \times 1} \end{bmatrix}$, are of the form
	\begin{enumerate}[label=(\roman*)]
		\item $[\h+\htilde,1]$, where $\htilde$ is in the span of the rows of $\Horth$.
		\item $[\htilde,1]$, where $\htilde$ is in the span of the rows of $\Horth$.
	\end{enumerate}
	If there is at least one row $\h'_1 = [\h+\htilde',1]$ of type $\mathrm{(i)}$, then all such rows are in the span of $\begin{bmatrix} \Horth & \ve{0}_{\nrows{\Horth} \times 1} \end{bmatrix}$ and $\h'_1$ since
	\begin{align*}
		[\h+\htilde,1] = \underset{= \, \h'_1}{\underbrace{[\h+\htilde',1]}} + \underset{\in \, \RS{\begin{bmatrix} \Horth & \ve{0}_{\nrows{\Horth} \times 1} \end{bmatrix}}}{\underbrace{[\htilde-\htilde',0].}}
	\end{align*}
	A similar argument holds for rows of type $\mathrm{(ii)}$, if there is one such row $\h'_2$.
	
	Hence, the rows of $\H'$ are spanned by the rows of the matrix $\RS{\begin{bmatrix} \Horth & \ve{0}_{\nrows{\Horth} \times 1} \end{bmatrix}}$ and $\h'_1$ and $\h'_2$ and its rank is
	\begin{align*}
	\rank\left( \H' \right) \leq \rank\left( \begin{bmatrix} \Horth & \ve{0}_{\nrows{\Horth} \times 1} \end{bmatrix} \right) + 2 = \rank\left(\Horth\right) +2,
	\end{align*}
	which proves the claim.
\end{proof}

Thus, in the case of a successful code construction, by Corollary~\ref{cor:log2S}, the uncertainty of the attacker about the vector $\r$ is lower-bounded as follows.
\begin{corollary}
	If $\rank(\Horth) \geq \rank(\Horig)-1$, then $\AttackerUncertainty \geq k-2$.
\end{corollary}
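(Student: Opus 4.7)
The plan is to chain the two previously established results, since the hard work has already been done in Corollary~\ref{cor:log2S} and the immediately preceding theorem. From Corollary~\ref{cor:log2S} I have the lower bound
\begin{align*}
\AttackerUncertainty \;\geq\; n - \rank(\H'),
\end{align*}
so it suffices to produce an upper bound on $\rank(\H')$ in terms of $k$.

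The preceding theorem supplies exactly what is needed: under the standing hypothesis $\rank(\Horth) \geq \rank(\Horig)-1$, it gives $\rank(\H') \leq \rank(\Horth)+2$. Substituting this into the Corollary~\ref{cor:log2S} bound yields
\begin{align*}
\AttackerUncertainty \;\geq\; n - \rank(\Horth) - 2.
\end{align*}

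The final step is to identify $n-\rank(\Horth)$ with the code dimension $k$. By the construction in Section~\ref{subsec:construction} we have $\Code = \OC{\Horth}$, so $k = \dim(\Code) = n - \rank(\Horth)$, and the claimed inequality $\AttackerUncertainty \geq k-2$ follows immediately.

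There is essentially no obstacle here, since every nontrivial ingredient has been proved above; the only point to verify carefully is the bookkeeping that the $k$ appearing in the statement really equals $\dim(\OC{\Horth})$, which is just the definition of $\Code$. No case distinction, no probabilistic argument, and no further computation is required — the corollary is a one-line consequence of the two results it invokes.
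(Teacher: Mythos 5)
Your proof is correct and follows exactly the chain the paper intends: it combines Corollary~\ref{cor:log2S} with the bound $\rank(\H') \leq \rank(\Horth)+2$ from the preceding theorem, and uses $k = n-\rank(\Horth)$ from $\Code = \OC{\Horth}$. The paper leaves this derivation implicit, but your bookkeeping matches it step for step.
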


Recall that we have shown in Section~\ref{sec:successful_construction}, that the assumption $\rank(\Horth) \geq \rank(\Horig)-1$ is fulfilled with high probability and therefore, with high probability, we get $\AttackerUncertainty \geq k-2$.

\section{APPLICATION: PHYSICAL UNCLONABLE FUNCTIONS (PUFS)}
\label{sec:application}

The problem discussed in this paper can be found in the context of Physical Unclonable Functions (PUFs), devices that can be used for cryptographic purposes like identification, authentication and secure key generation. 
A PUF extracts a unique and reproducible sequence of bits, which is called response, from an integrated circuit.
The uniqueness results from randomness which is intrinsic to each device, due to technical and physical variations within the manufacturing process. 
Since these variations cannot be controlled by the manufacturer, devices which exploit this behavior are called physically unclonable, since the behavior cannot be cloned.
Usually either the delay behavior of a device (e.g. Ring-Oscillator PUFs \cite{suh2007physical}) or the initialization behavior of memory cells (e.g. SRAM PUFs \cite{guajardo2007fpga}) is used in order to extract responses.
An extracted response can for example be used as cryptographic key due to its uniqueness.

One of the major advantages of using PUFs for key generation is, that there is no need to store the key in a non-volatile memory, since it can be simply reproduced when it is needed by the cryptosystem. 
Storing a key in a non-volatile memory makes a system vulnerable to physical attacks, even when a protected memory is used.
Due to measurement noise or environmental conditions like temperature, supply voltage or aging of the chip, errors occur when the response is repeatedly extracted.
For this reason, error correction within a so-called \emph{helper data algorithm} is needed.
For comprehensive details regarding different PUF constructions we refer to the literature \cite{bohm2012physical,maes2013physically,wachsmann2014physically}.

\subsection{Existing Helper Data Algorithms}

Different helper data algorithms have been proposed in the literature.
The most popular ones are \emph{Code-Offset Construction} and \emph{Syndrome Construction} \cite{linnartz2003new,dodis2004fuzzy}.
Both have in common, that an error correcting code as well as additional helper data are required in order to reproduce a PUF response.
Helper data algorithms consist of two phases, \emph{initialization phase} and \emph{reproduction phase}.
In the initialization phase, helper data are extracted from an initial PUF response. 
This phase is performed only once within a secure environment during the manufacturing process of the device. 
In the reproduction phase, these helper data are used in order to reproduce the initial response from a regenerated, erroneous response. 
The reproduction phase occurs in field whenever the cryptosystem needs access to the key.

We describe the Code-Offset Construction according to the literature \cite{linnartz2003new,dodis2004fuzzy}, which is visualized in Figure~\ref{fig:codeoffset}.
In the initialization phase of the Code-Offset Construction, a codeword $\c$ from a specified code $\Code$ is randomly chosen and added bitwise to the initial response $\r$. 
The model assumes that $\r \sim \mathcal{U}(\F^n)$.
The bit sequence $\h$ which is obtained by this operation is stored as helper data in a non-volatile, possibly non-protected helper data storage.
In the reproduction phase, which is executed in the field as often as the cryptosystem needs the key, the helper data $\h$ are added to a new extracted response $\r'$ in a pre-processing step of the error correction.
The result of this operation has the form $\c+\e$ where $\e$ is a vector of low weight and hence can be interpreted as error vector. 
A decoder for the chosen code can be used in order to calculate $\c$.
Adding again the helper data $\h$ to $\c$ yields the initial response $\r$ which was extracted in the initialization phase.

\begin{figure}[h]
	\scalebox{1.0}{{
\resizebox{0.48\textwidth}{!}{
\centering
\begin{tikzpicture}[scale=0.6]

\draw (0,0) -- (0,3);
\draw (0,3) -- (3,3);
\draw (3,3) -- (3,0);
\draw (3,0) -- (0,0);
\draw (1.5,1.5) node {PUF};
%\draw (1.5,0) node [below] {$\Romannumcolor{1}$};

\draw (3,1.5) -- (4.5,1.5);
\draw[->,>=latex] (5,1.5-0.86602540378) -- (5,-3);
%\draw (5,1.5) circle (0.5);
\draw[fill] (4.5,1.5) circle (2pt);
\draw[fill] (5.5,1.5) circle (2pt);
\draw[fill] (5,1.5-0.86602540378) circle (2pt);
%\draw (5,-2) node [left] {(c)};
\draw [dashed,domain=-80:20] plot ({4.5+cos(\x)}, {1.5+sin(\x)});
%\draw [dotted,domain=-80:20] plot ({4.5+cos(\x)/2}, {1.5+sin(\x)/2});
\draw[->,>=latex] (4.5,1.5) -- (5,1.5-0.86602540378);
\draw[->,>=latex] (4.5,1.5) -- (5.4,1.5);
\draw (7.3,1.5) node [above] {\unequal{$r$}};
\draw (5,-0.7) node [left] {\unequal{$r' = r \oplus e$~~~~~~}};
\draw (5,-1.3) node [left] {\unequal{$~ = c \oplus h \oplus e$}};

\draw (9,0) -- (9,3);
\draw (9,3) -- (12,3);
\draw (12,3) -- (12,0);
\draw (12,0) -- (9,0);
\draw (10.5,2.5) node {Helper};
\draw (10.5,1.5) node {Data};
\draw (10.5,0.5) node {Generation};
\draw (10.5,0) node [below] {\blue{$c \in_R \mathcal{C}$}};
\draw (10.5,-0.7) node [below] {\blue{$r = c \oplus h$}};

\draw[->,>=latex] (5.5,1.5) -- (9,1.5);
\draw[->,>=latex] (12,1.5) -- (14,1.5);
\draw (13,1.5) node [above] {\unequal{$\mathcal{C},h$}};

\draw (14,0) -- (14,3);
\draw (14,3) -- (17,3);
\draw (17,3) -- (17,0);
\draw (17,0) -- (14,0);
\draw (15.5,2.5) node {Helper};
\draw (15.5,1.5) node {Data};
\draw (15.5,0.5) node {Storage};
%\draw (15.1,0) node [below] {$\Romannumcolor{3}$};

\draw (16,0) -- (16,-2);
\draw (16,-2) -- (6,-2);
\draw[->,>=latex] (6,-2) -- (6,-3);
\draw (6,-2.5) node [right] {\unequal{$h$}};
\draw[->,>=latex] (10.5,-2) -- (10.5,-3);
\draw (10.5,-2.5) node [right] {\unequal{$\mathcal{C}$}};

\draw (4,-3) -- (7,-3);
\draw (7,-3) -- (7,-6);
\draw (7,-6) -- (4,-6);
\draw (4,-6) -- (4,-3);
\draw (5.5,-4.0) node {Pre--};
\draw (5.5,-5.0) node {processing};
\draw (4.4,-6.0) node [below] {\blue{$y = r' \oplus h$}};
\draw (5.6,-6.6) node [below] {\blue{$~ = c \oplus h \oplus e \oplus h$}};
\draw (4.2,-7.2) node [below] {\blue{$~~ = c \oplus e$}};

\draw[->,>=latex] (7,-4.5) -- (9,-4.5);
\draw (8,-4.5) node [above] {\unequal{$y$}};

\draw (9,-3) -- (12,-3);
\draw (12,-3) -- (12,-6);
\draw (12,-6) -- (9,-6);
\draw (9,-6) -- (9,-3);
\draw (10.5,-4.5) node {Decoder};
\draw (10.5,-6.0) node [below] {\blue{$\hat{c} = dec(y)$}};

%\draw[->,>=latex] (17.5,-6) -- (17.5,-8);
%\draw (17.5, -7.0) node [right] {Key};

\draw[->,>=latex] (12,-4.5) -- (16,-4.5);

\draw (14,-4.5) node [above] {\unequal{$\hat{r} = \hat{c} \oplus h$}};

%\draw (16,-3) -- (19,-3);
%\draw (19,-3) -- (19,-6);
%\draw (19,-6) -- (16,-6);
%\draw (16,-6) -- (16,-3);
%\draw (17.5,-4.5) node {Hash};

\draw (17,-4.5) node {Key};

\draw [dashed] (6,-1.5) -- (6,4.5);
\draw [dashed] (6,4.5) -- (18,4.5);
\draw [dashed] (18,4.5) -- (18,-1.5);
\draw [dashed] (6,-1.5) -- (18,-1.5);
\draw (9.5,4.4) node [below] {Initialization};

\draw [dashed] (2.5,-1.8) -- (2.5,-8.6);
\draw [dashed] (2.5,-1.8) -- (12.5,-1.8);
\draw [dashed] (12.5,-1.8) -- (12.5,-8.6);
\draw [dashed] (12.5,-8.6) -- (2.5,-8.6);
\draw (6,-8.8) node [above] {Reproduction};

\end{tikzpicture}
}
}}
	\caption{Code-Offset Construction \cite{linnartz2003new,dodis2004fuzzy} for generating helper data.}
	\label{fig:codeoffset}
\end{figure}
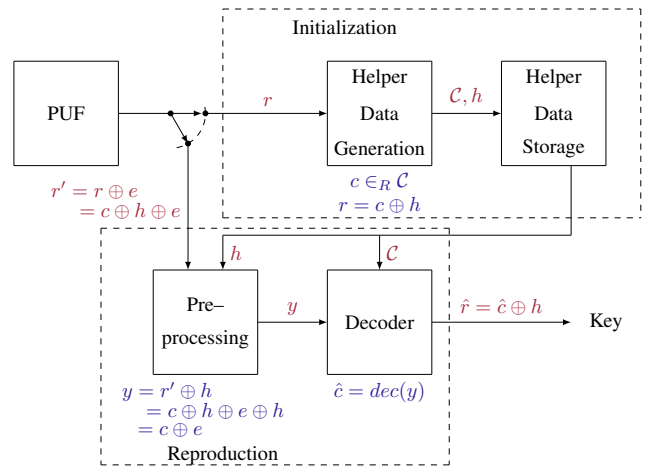

\subsection{New Helper Data Algorithm by M\"uelich and Bossert}

Recently, a scheme which only uses a code without additional helper data was proposed in \cite{muelich2017new}.
The main idea of that new scheme, which is visualized in Figure~\ref{fig:scheme_ldpc}, is to construct a code $\Code$, such that the initial PUF response $\r$ directly is a codeword of $\Code$.
Note, that here Problem~\ref{problem} occurs.
The code construction is executed in the initialization phase of the scheme and was discussed in detail in Section~\ref{subsec:construction}.
The response $\r'$, which is extracted in the reproduction phase, slightly differs from the initial response $\r$ and hence can be described as $\r' = \r + \e$, where $\e$ is again a vector of low weight.
If a decoder for the constructed code $\Code$ is used, $\r$ can be recovered by simply decoding $\r'$ (cf. Algorithm~\ref{alg:schemerep}).

\begin{figure}[h]
	\scalebox{1.0}{{
\resizebox{0.48\textwidth}{!}{
\centering
\begin{tikzpicture}[scale=0.6]

\draw (0,0) -- (0,3);
\draw (0,3) -- (3,3);
\draw (3,3) -- (3,0);
\draw (3,0) -- (0,0);
\draw (1.5,1.5) node {PUF};
%\draw (1.5,0) node [below] {$\Romannumcolor{1}$};

\draw (3,1.5) -- (4.5,1.5);
\draw (5,1.5-0.86602540378) -- (5,-4.5);
%\draw (5,1.5) circle (0.5);
\draw[fill] (4.5,1.5) circle (2pt);
\draw[fill] (5.5,1.5) circle (2pt);
\draw[fill] (5,1.5-0.86602540378) circle (2pt);
%\draw (5,-2) node [left] {(c)};
\draw [dashed,domain=-80:20] plot ({4.5+cos(\x)}, {1.5+sin(\x)});
%\draw [dotted,domain=-80:20] plot ({4.5+cos(\x)/2}, {1.5+sin(\x)/2});
\draw[->,>=latex] (4.5,1.5) -- (5,1.5-0.86602540378);
\draw[->,>=latex] (4.5,1.5) -- (5.4,1.5);
\draw (7.3,1.5) node [above] {\unequal{$r$}};
\draw  (6,-1.3) node [left] {\unequal{$r' = r \oplus e$~~~~~~}};

\draw (9,0) -- (9,3);
\draw (9,3) -- (12,3);
\draw (12,3) -- (12,0);
\draw (12,0) -- (9,0);
\draw (10.5,2.5) node {Helper};
\draw (10.5,1.5) node {Data};
\draw (10.5,0.5) node {Generation};
\draw (10.5,0) node [below] {\blue{Construct Code $\mathcal{C}$}};
\draw (10.5,-0.7) node [below] {\blue{such that $r \in \mathcal{C}$}};

\draw[->,>=latex] (5.5,1.5) -- (9,1.5);
\draw[->,>=latex] (12,1.5) -- (14,1.5);
\draw (13,1.5) node [above] {\unequal{$\mathcal{C}$}};

\draw (14,0) -- (14,3);
\draw (14,3) -- (17,3);
\draw (17,3) -- (17,0);
\draw (17,0) -- (14,0);
\draw (15.5,2.5) node {Helper};
\draw (15.5,1.5) node {Data};
\draw (15.5,0.5) node {Storage};
%\draw (15.1,0) node [below] {$\Romannumcolor{3}$};

\draw (16,0) -- (16,-2);
\draw (16,-2) -- (10.5,-2);
\draw[->,>=latex] (10.5,-2) -- (10.5,-3);
\draw (10.5,-2.5) node [right] {\unequal{$\mathcal{C}$}};

\draw[->,>=latex] (5,-4.5) -- (9,-4.5);
\draw (8,-4.5) node [above] {\unequal{$r'$}};

\draw (9,-3) -- (12,-3);
\draw (12,-3) -- (12,-6);
\draw (12,-6) -- (9,-6);
\draw (9,-6) -- (9,-3);
\draw (10.5,-4.5) node {Decoder};
\draw (10.5,-6.0) node [below] {\blue{$\hat{r} = dec(r')$}};

%\draw[->,>=latex] (17.5,-6) -- (17.5,-8);
%\draw (17.5, -7.0) node [right] {Key};

\draw[->,>=latex] (12,-4.5) -- (16,-4.5);

\draw (14,-4.5) node [above] {\unequal{$\hat{r}$}};

%\draw (16,-3) -- (19,-3);
%\draw (19,-3) -- (19,-6);
%\draw (19,-6) -- (16,-6);
%\draw (16,-6) -- (16,-3);
\draw (17,-4.5) node {Key};

\draw [dashed] (6,-1.5) -- (6,4.5);
\draw [dashed] (6,4.5) -- (18,4.5);
\draw [dashed] (18,4.5) -- (18,-1.5);
\draw [dashed] (6,-1.5) -- (18,-1.5);
\draw (9.5,4.4) node [below] {Initialization};

\draw [dashed] (2.5,-1.8) -- (2.5,-8.6);
\draw [dashed] (2.5,-1.8) -- (12.5,-1.8);
\draw [dashed] (12.5,-1.8) -- (12.5,-8.6);
\draw [dashed] (12.5,-8.6) -- (2.5,-8.6);
\draw (6,-8.8) node [above] {Reproduction};

\end{tikzpicture}
}
}}
	\caption{Scheme according to \cite{muelich2017new}.}
	\label{fig:scheme_ldpc}
\end{figure}
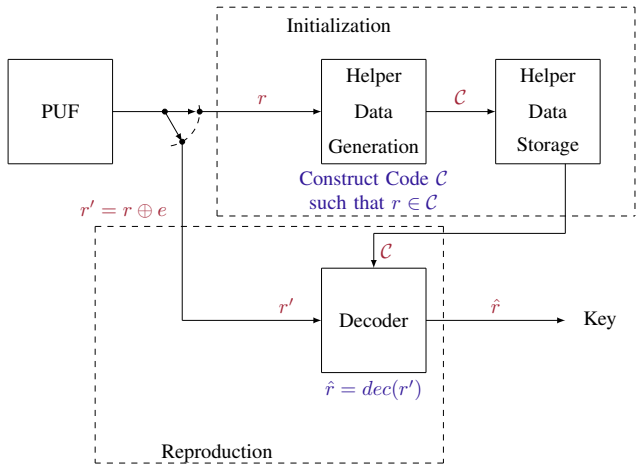

Algorithm~\ref{alg:scheme} which has been stated in Section~\ref{subsec:construction} can directly be used for the initialization phase of the Helper Data Algorithm proposed in \cite{muelich2017new}, while Algorithm~\ref{alg:schemerep} summarizes the corresponding reproduction phase.
Note, that if the uncertainty is too low, the PUF can be removed from the set of suitable PUFs. This is no problem since it is done during the manufacturing process.
\begin{algorithm}
	\KwIn{Re-extracted PUF response $\r = \rI + \e \in \{0,1\}^n$, $\H$}
	\KwOut{Decoding result $\hat{\r} = dec(\r, \H)$}
	$\hat{\r} = dec(\r,\H)$\\
	\Return $\hat{\r}$ \label{alg:schemerep2}\\
	\caption{Helper Data Algorithm: Reproduction phase \cite{muelich2017new}}
	\label{alg:schemerep}
\end{algorithm}

\subsection{Comparison to Existing Schemes}

The essential difference between the new helper data algorithm and previous ones is that only a code is needed, but no further helper data are required.
Of course, this can only be advantageous in PUF applications that do not use a challenge-response behavior.
However, there are many applications in which this requirement is met.
This, for instance, includes Physical Obfuscated Keys (POKs) \cite{gassend2003ma}.
Another application of PUFs with a single response is identification, where the response serves a unique identifier of a device, e.g., \cite{lofstrom2000ic}.\footnote{It is not suited for an authentication protocol as, e.g., described in~\cite{gassend2002silicon}, since this would require to generate and store a code for each possible challenge.}
In these cases, the new scheme is interesting since:
\begin{enumerate}
	\item There is less potential for side-channel attacks which aim on helper data, e.g. the attack described in \cite[Section 5.2]{merli2011side} is not applicable with that scheme.
	\item Processing cost in the reproduction phase is reduced, i.e., there is no need for an additional pre-processing step before decoding (cf. Figures~\ref{fig:codeoffset} and~\ref{fig:scheme_ldpc}).
	Also this allows a smaller implementation, which is essential in PUF applications as they are usually implemented on rather limited hardware.
	\item The amount of helper data is reduced in comparison to other methods based on LDPC codes.
\end{enumerate}
Moreover, it is a conceptual new method in comparison to existing schemes.
Besides PUFs, we belief that the problem also is interesting for other applications.

\section{CONCLUSION}
\label{sec:conclusion}
This work discussed the problem of constructing a code such that a given binary vector is included as codeword.
We discussed open questions about constructability and the security level of an LDPC code construction proposed in \cite{muelich2017new}. 
The results of this work essentially imply two insights:
First, the code construction is possible with high probability for desired code length and dimension.
Second, the security level is known and sufficiently large when the dimension $k$ of the code is chosen as $k \geq H(\r|\Code)+2$.
Using these results, we have shown the practicability of the helper data scheme introduced in \cite{muelich2017new}.
It is of further interest, which other code classes can be used in order to solve the stated problem and whether more applications which contain the stated problem can be identified.

\addtolength{\textheight}{-12cm}

\end{document}